\newtheorem{theorem}{Theorem}
\def\cz{\mathbbm{C}}
\def\rz{\mathbbm{R}}
\def\cE{\mathcal{E}}
\def\cI{\mathcal{I}}
\def\dx{\mathrm{d}x}
\def\dy{\mathrm{d}y}
\def\tr{\mathrm{tr}}
\def\const{\mathrm{const}\;}
\def\etf{E^{\mathrm{TF}}(Z)}
\begin{document}

\title[Atomic Density of the Chandrasekhar Hamiltonian]{The Atomic Density on
  the Thomas--Fermi Length Scale for the Chandrasekhar Hamiltonian}

\author[K. Merz]{Konstantin Merz}
\email{merz@math.lmu.de}

\author[H. Siedentop]{Heinz Siedentop} \email{h.s@lmu.de}
\address{Mathematisches Institut\\ Ludwig-Maximilians-Universit\"at
  M\"unchen\\ Theresienstra\ss e 39\\ 80333 M\"unchen\\ Germany}

\begin{abstract}
  We consider a large neutral atom of atomic number $Z$, modeled by a
  pseudo-relativistic Hamiltonian of Chandrasekhar.  We study its
  suitably rescaled one-particle ground state density on the
  Thomas--Fermi length scale $Z^{-1/3}$. Using an observation by
  Fefferman and Seco \cite{FeffermanSeco1989}, we find that the
  density on this scale converges to the minimizer of the Thomas--Fermi
  functional of hydrogen as $Z\to\infty$ when $Z/c$ is fixed
  to a value not exceeding $2/\pi$. This shows that the electron
  density on the Thomas--Fermi length scale does not exhibit any
  relativistic effects.
\end{abstract}

\maketitle
\section{Introduction}
The energy of heavy atoms as well as the distribution of its electrons
are of fundamental interest both in physics and in quantum
chemistry. However, as in the classical Kepler problem, one cannot
hope for an exact solution of the Schr\"odinger equation involving
more than two particles. For this reason, one needs to devise models
for many-body quantum systems which are easier to solve but still
describe the system accurately.

Lieb and Simon \cite{LiebSimon1977} showed that the atomic ground
state density converges on the length scale $Z^{-1/3}$ to the
minimizer of the Thomas--Fermi functional of hydrogen. This results is
derived by controlling the atomic energy to leading order in $Z$ and
its derivative with respect to small perturbations.

However, it is questionable to describe large $Z$ atoms
non-relativistically, since the large nuclear charge forces the bulk
of the electrons on orbits on the length scale $Z^{-1/3}$ from the
nucleus. Thus, electrons close to the nucleus are moving faster than a
substantial fraction of the velocity of light $c$. This suggests that a
relativistic description is necessary.

On the other hand, S\o rensen \cite{Sorensen2005} showed in the context
of the simplest relativistic model, namely the Chandrasekhar operator,
that energetically this worry is not justified, at least not to
leading order in the energy: the atomic ground state energy of the
Chandrasekhar operator is still described by the Thomas--Fermi energy
for large $Z$ and $\gamma:=Z/c$ fixed to a value not exceeding the
critical coupling constant $\gamma_c:=2/\pi$. A similar result for the
Brown--Ravenhall operator was proven by Cassanas et al
\cite{CassanasSiedentop2006}.

Schwinger \cite{Schwinger1981} predicted that relativistic effects
occur only in sub-leading order. Frank et al \cite{Franketal2008} and
Solovej et al \cite{Solovejetal2008} showed, using completely
different approaches, that this is indeed the case.  In particular,
the authors showed that the coefficient of this order is less than the
non-relativistic one which reflects the fact that the relativistic
kinetic energy is lower than the non-relativistic one, especially for
high momenta.

The question arises whether the density on the Thomas--Fermi length
scale $Z^{-1/3}$ is also unchanged by relativistic effects. This might
be conjectured, since the leading energy correction is generated by
the fast electrons close to the nucleus. Our main result is a positive
answer to this question: we show that the suitably rescaled density of
the atomic Chandrasekhar operator converges for large $Z$ and $\gamma$
fixed to a value not exceeding $\gamma_c$ to the minimizer of the
Thomas--Fermi functional of hydrogen.

\section{Definition and main result}
Our system consists of a neutral atom, i.e., a nucleus of charge $Z$
located at the origin with $N=Z$ electrons with $q$ spin states whose
motion is described by the Chandrasekhar operator. It is given by the
Friedrichs extension of the quadratic form associated to
\begin{equation}\label{C}
C_{c,Z}:=\sum_{\nu=1}^N\left(\sqrt{-c^2\Delta_\nu+c^4}-c^2-\frac{Z}{|x_\nu|}\right)
+\sum_{1\leq \nu<\mu\leq N}\frac{1}{|x_\nu-x_\mu|}
\end{equation}
in the Fermionic Hilbert space $\bigwedge_{\nu=1}^N(L^2(\rz^3):\cz^q).$
(Throughout we use atomic units, i.e., $\hbar=e=m=1$.) The constant $c$
denotes the velocity of light which in these units is the inverse of
Sommerfeld's fine-structure constant $\alpha$. Here we focus on
$N=Z$.  The form is bounded from below, if and only if
$\gamma\leq\gamma_c$ (Kato \cite[Chapter Five, Equation
(5.33)]{Kato1966}, Herbst \cite[Theorem 2.5]{Herbst1977}, Weder
\cite{Weder1974}). For $\gamma<\gamma_c$, its form domain
is $H^{1/2}(\rz^{3N}:\cz^{q^N})\cap\bigwedge_{\nu=1}^N(L^2(\rz^3:\cz^q))$ by the KLMN
theorem. In fact, Hardekopf and Sucher \cite{HardekopfSucher1985}
indicated numerically and gave arguments and Raynal et al
\cite{Raynaletal1994} showed that the one-particle operator is
strictly bigger than $-1$, even for $\gamma=\gamma_c$.

A general fermionic ground state density matrix can be written as
$$\sum_{\mu=1}^M w_\mu\lvert\psi_\mu\rangle\langle\psi_\mu\rvert$$
where the $\psi_\mu$ constitute an orthonormal basis of the ground
state eigenspace and the $w_\mu$ are non-negative weights such that
$\sum_{\mu=1}^M w_\mu=1$.  The corresponding one-particle density
$\rho$ is given by
$$
\rho_Z(x)
:=N\sum_{\mu=1}^Mw_\mu\sum_{\sigma_1,...,\sigma_N=1}^q\int_{\rz^{3(N-1)}}|\psi_\mu(x,\sigma_1;x_2,\sigma_2;...;x_N,\sigma_N)|^2\dx_2...\dx_N.
$$
The ground state energy of this system for fixed $\gamma$ is written as
$E(Z):=\inf\sigma(C_{c,Z})$.  Solovej et al \cite{Solovejetal2008}
and Frank et al \cite{Franketal2008} determined the first two terms of
the expansion of $E(Z)$ for $Z\to\infty$ and
$\gamma\leq\gamma_c$ fixed to be
\begin{equation}
  \label{eq:2.4}
  E(Z)=E^{\mathrm{TF}}(Z)+\left(\frac q4-s(\gamma)\right)Z^2+O(Z^{47/24})
\end{equation}
where
$$
s(\gamma):=\gamma^{-2}\tr\left[\left(\frac{p^2}{2}-{\gamma\over|x|}\right)_--\left(\sqrt{p^2+1}-1-\frac{\gamma}{|x|}\right)_-\right]>0
$$
is the sum of differences between the $n$-th eigenvalues of
$$\left(-\tfrac12\Delta - {\gamma\over|x|}\right)\otimes \mathbbm{1}_{\cz^q}\
\text{and}\
\left(\sqrt{-\Delta+1}-1-{\gamma\over|x|}\right)\otimes\mathbbm{1}_{\cz^q}$$
and $E^{\mathrm{TF}}(Z)$ is the infimum of the atomic Thomas--Fermi
functional $\cE_Z^\mathrm{TF}$ on its natural domain $\cI$, i.e.,
$$
\etf:=\inf(\cE^\mathrm{TF}_Z(\cI))$$
with
$$\cE^\mathrm{TF}_Z(\rho):=\int_{\rz^3}\left(\tfrac{3}{10}\gamma_{\mathrm{TF}}\rho^{5/3}(x)-{Z\over|x|}\rho(x)\right)\dx+D(\rho,\rho)$$
and
$$\cI:=\{\rho\in L^{5/3}(\rz^3)\big|\ D(\rho,\rho)<\infty,\ \rho\geq0\}.$$
Here $\gamma_{\mathrm{TF}}:=(6\pi^2/q)^{2/3}$ is the Thomas--Fermi
constant and $D(\rho,\rho)$ is the electro-static selfenergy of the
charge density $\rho$, i.e.,
$$D(\rho,\sigma)=\frac12\int_{\rz^3}\int_{\rz^3}\frac{\overline{\rho(x)}\sigma(y)}{|x-y|}\dx\dy.$$
Note that $D$ defines a scalar product and thus induces a norm, the
so-called Coulomb norm $\|\rho\|_C:=D(\rho,\rho)^{1/2}$. The minimizer
of $\cE_Z^{\mathrm{TF}}$ is denoted by $\rho_Z^{\mathrm{TF}}$.  It
obeys the scaling relation
$\rho_Z^{\mathrm{TF}}(x)=Z^2\rho_1^{\mathrm{TF}}(Z^{1/3}x)$ where
$\rho_1^{\mathrm{TF}}$ is the Thomas--Fermi density of hydrogen, i.e.,
$Z=1$ (Gombas \cite{Gombas1949}). These scaling relations and the
leading order of $E(Z)$ show that the Thomas--Fermi theory is
energetically correct in leading order even, if relativistic effects
are taken into account. Our result on the convergence of the ground
state density shows that it is also a valid model for the density on
this length scale.

We write
\begin{equation}
  \label{rhohut}
  \hat\rho_Z(x):=Z^{-2}\rho_Z(Z^{-1/3}x)
\end{equation}
for the rescaled quantum density on the Thomas--Fermi scale. This
allows to formulate our main observation:
\begin{theorem}
  \label{Thm:2.1}
  Let $\gamma\in(0,\gamma_c]$, then
  $\hat\rho_Z\to\rho_1^\mathrm{TF}$ in Coulomb norm. In fact,
  $$
  \|\hat\rho_Z-\rho_1^{\mathrm{TF}}\|_C= O(Z^{-3/16})
  $$
  as $Z\to\infty.$
\end{theorem}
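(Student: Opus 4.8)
The plan is to follow the Lieb--Simon strategy of extracting density information from the energy asymptotics \eqref{eq:2.4} by perturbing the external potential and testing against the Coulomb inner product, which is presumably where the observation of Fefferman and Seco enters. Since $D$ is a genuine inner product, duality gives $\|\rho_Z-\rho_Z^{\mathrm{TF}}\|_C=\sup\{D(\rho_Z-\rho_Z^{\mathrm{TF}},Y)\,:\,\|Y\|_C\le1\}$, so it suffices to bound $D(\rho_Z-\rho_Z^{\mathrm{TF}},Y)$ uniformly over test charges $Y$ in the unit Coulomb ball. The claim for $\hat\rho_Z$ then follows from the scaling identity $\|\hat\rho_Z-\rho_1^{\mathrm{TF}}\|_C=Z^{-7/6}\|\rho_Z-\rho_Z^{\mathrm{TF}}\|_C$, using that $\rho_1^{\mathrm{TF}}$ is the rescaling of $\rho_Z^{\mathrm{TF}}$ and that the Coulomb norm picks up a factor $Z^{-7/6}$ under $x\mapsto Z^{-1/3}x$.

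For a fixed such $Y$ I would set $\varphi_Y:=Y\ast|\cdot|^{-1}$ and consider, for small $\delta$, the perturbed operator $C_{c,Z}+\delta\sum_{\nu}\varphi_Y(x_\nu)$ together with the perturbed functional $\cE_Z^{\mathrm{TF}}(\cdot)+2\delta D(\cdot,Y)$; write $E(\delta)$ and $E^{\mathrm{TF}}(\delta)$ for the respective infima, both concave in $\delta$. Using the unperturbed quantum ground state as a trial state gives the exact upper bound $E(\delta)\le E(0)+2\delta D(\rho_Z,Y)$, while the strict convexity of the Thomas--Fermi functional, in the quantitative form $\cE_Z^{\mathrm{TF}}(\rho)\ge\etf+\|\rho-\rho_Z^{\mathrm{TF}}\|_C^2$, together with completing the square yields $E^{\mathrm{TF}}(\delta)\ge E^{\mathrm{TF}}(0)+2\delta D(\rho_Z^{\mathrm{TF}},Y)-\delta^2\|Y\|_C^2$. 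Subtracting these, and abbreviating $h(\delta):=E(\delta)-E^{\mathrm{TF}}(\delta)$, I obtain $2\delta D(\rho_Z-\rho_Z^{\mathrm{TF}},Y)\ge h(\delta)-h(0)-\delta^2\|Y\|_C^2$, with the reverse inequality coming from the same computation run at $-\delta$.

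The decisive input is that the expansion \eqref{eq:2.4} persists under the perturbation with an \emph{unchanged} coefficient $q/4-s(\gamma)$: for $Y$ in a suitable dense subclass (smooth and compactly supported, so that $\varphi_Y$ is bounded near the nucleus and $\delta\varphi_Y$ is relatively small compared with the dominant Thomas--Fermi potential) one must show $h(\delta)=(q/4-s(\gamma))Z^2+O(Z^{47/24})$ uniformly for small $\delta$. The point is not that the order-$Z^2$ term is small but that it is a local effect of the $-Z/|x|$ singularity, hence $\delta$-independent, so it cancels in $h(\delta)-h(0)$, leaving $|h(\delta)-h(0)|\le CZ^{47/24}$. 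The observation of Fefferman and Seco should allow this perturbed estimate to be deduced from the machinery of Solovej et al.\ and Frank et al.\ rather than reproved. Granting it, the two displayed inequalities give $|2\delta D(\rho_Z-\rho_Z^{\mathrm{TF}},Y)|\le CZ^{47/24}+\delta^2\|Y\|_C^2$, and optimizing over $\delta$ produces $|D(\rho_Z-\rho_Z^{\mathrm{TF}},Y)|\le C'Z^{47/48}\|Y\|_C$. Taking the supremum over the unit Coulomb ball and rescaling gives $\|\hat\rho_Z-\rho_1^{\mathrm{TF}}\|_C\le C'Z^{47/48-7/6}=C'Z^{-3/16}$; note how the energy error exponent $47/24$ is converted, through one square root and the factor $Z^{-7/6}$, into exactly the density rate $-3/16$.

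The \emph{main obstacle} is the perturbed energy lower bound with stable Scott correction: the quantum upper bound and the Thomas--Fermi convexity estimate are elementary and robust, but bounding $E(\delta)$ from below to order $O(Z^{47/24})$ demands that the delicate semiclassical and near-nucleus analysis behind \eqref{eq:2.4} go through uniformly in the perturbation. Two subsidiary points need care. First, the test charges $Y$ must be confined to a class for which $\varphi_Y$ is harmless at the origin, so that the Scott term is genuinely left unaffected, and then extended to the full unit Coulomb ball by a density argument. Second, one must check that the optimal perturbation strength is compatible with the regime in which the uniform expansion holds; this is cleanest to verify on the Thomas--Fermi length scale, where the density has unit mass, the Coulomb norm is of order one, and the relevant perturbation is genuinely small.
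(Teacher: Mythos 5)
Your overall architecture --- duality in the Coulomb inner product, a linear-response comparison of the perturbed quantum and Thomas--Fermi energies, optimization over the coupling $\delta$, and the scaling factor $Z^{-7/6}$ that converts $Z^{47/48}$ into $Z^{-3/16}$ --- is internally consistent and is the classical Lieb--Simon route. The genuine gap is precisely the step you yourself flag as the ``decisive input'': that $E(\delta)-E^{\mathrm{TF}}(\delta)=(q/4-s(\gamma))Z^2+O(Z^{47/24})$ holds uniformly for $\delta$ up to the optimizing value $\delta\sim Z^{47/48}\|Y\|_C^{-1}$ and uniformly over a class of test charges $Y$ dense in the unit Coulomb ball. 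This is not a technicality to be ``deduced from the machinery'' of Frank et al.\ and Solovej et al.: it amounts to re-proving the relativistic Scott conjecture, with the same error exponent, for a whole family of perturbed potentials with uniform control in the perturbation strength and in $Y$. No such statement is available off the shelf, and establishing it would be substantially harder than the theorem itself. Your guess about where the Fefferman--Seco observation enters (the testing/duality step) is also not what is meant.

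The paper's actual argument needs no perturbation at all. In the lower bound behind \eqref{eq:2.4} the electron repulsion is estimated, after completing the square against $\rho_Z^{\mathrm{TF}}$, by
$$
2D(\rho_Z,\rho_Z^{\mathrm{TF}})-D(\rho_Z^{\mathrm{TF}},\rho_Z^{\mathrm{TF}})+\|\rho_Z-\rho_Z^{\mathrm{TF}}\|_C^2
$$
minus controllable errors; the Fefferman--Seco observation is simply to \emph{keep} the manifestly nonnegative term $\|\rho_Z-\rho_Z^{\mathrm{TF}}\|_C^2$ rather than discard it. The existing proof then yields, with no new analysis,
$$
\etf+\left(\tfrac q4-s(\gamma)\right)Z^2+\|\rho_Z^{\mathrm{TF}}-\rho_Z\|_C^2-\const Z^{47/24}\leq E(Z)\leq \etf+\left(\tfrac q4-s(\gamma)\right)Z^2+\const Z^{47/24},
$$
and subtracting the two sides gives $\|\rho_Z^{\mathrm{TF}}-\rho_Z\|_C^2\leq\const Z^{47/24}$ directly, hence $\|\hat\rho_Z-\rho_1^{\mathrm{TF}}\|_C=O(Z^{(47/24-7/3)/2})=O(Z^{-3/16})$ after rescaling. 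The exponents in your scheme happen to land in the same place (one square root is spent at the $\delta$-optimization instead of at the final step), but the paper's version bounds the square of the Coulomb norm by the energy error without ever touching the Hamiltonian, which is why it closes in half a page. To salvage your approach you would have to actually prove the uniform perturbed Scott expansion, and you should not expect to avoid that work by citation.
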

Before proving this claim, we remark that the Schwarz inequality
implies also weak convergence: suppose $\sigma$ has finite Coulomb
norm, i.e., $\|\sigma\|_C<\infty$. Then
$$D(\sigma,\hat\rho_Z-\rho_1^{\mathrm{TF}})=O(Z^{-3/16}).$$
(Note that the Hardy--Littlewood--Sobolev inequality ensures that this
is the case for all $\sigma\in L^{6/5}(\rz^3)$ but that this is not
exhaustive. For example, $\sigma$ might also be a uniform charge
distribution on a sphere.)

Finally, setting $\sigma := -(1/4\pi)\Delta U$ with $U$ vanishing at
infinity gives
$$ \int U\rho \to \int U\rho_1^\mathrm{TF}\ \text{as}\ Z\to\infty$$
for all such $U$.

\begin{proof}[Proof of Theorem \ref{Thm:2.1}]
  The basic observation is, that also in this case -- as in the
  non-relativistic case done by Fefferman and Seco
  \cite{FeffermanSeco1989} -- it is useful to keep some positive term
  in the lower bound in the proof of an asymptotic energy formula:
  tracing the lower bound, the proof of the Scott conjecture by Frank
  et al does not only give the Scott formula \eqref{eq:2.4}. If one
  does not drop the positive term in Onsager's inequality -- unlike as is
  done there, we get for fixed $\gamma\in(0,2/\pi]$ the two bounds
   \begin{multline}
     \etf+\left(\frac q4-s(\gamma)\right)Z^2+\|\rho_Z^{\mathrm{TF}}-\rho_{Z}\|_C^2
     -\const Z^{47/24}\\
     \leq E(Z) \leq \etf+\left(\frac
       q4-s(\gamma)\right)Z^2+\const Z^{47/24}.
   \end{multline}
   We observe that the left and right side have identical terms up to
   order $Z^2$. Subtracting them and rearranging gives
   \begin{equation}
     \label{unskaliert}
     \|\rho_Z^{\mathrm{TF}}-\rho_{Z}\|_C^2     \leq \const Z^{47/24}.
   \end{equation}
   Since $\rho_Z^\mathrm{TF}(x)=Z^2\rho_1^\mathrm{TF}(Z^{1/3}x)$ and
   by definition of $\hat\rho_Z$ in \eqref{rhohut}, we obtain by
   change of variables
   \begin{multline}
     \|\rho_Z^\mathrm{TF}-\rho_Z\|_C^2 =\frac12 \int \dx \int dy {(\rho_Z^\mathrm{TF}(x)-\rho_Z(x))(\rho_Z^\mathrm{TF}(y)-\rho_Z(y)\over |x-y|}\\
     =\frac{Z^{7/3}}2 \int \dx \int dy
     {(\rho_1^\mathrm{TF}(x)-\hat\rho_1(x))(\rho_1^\mathrm{TF}(y)-\hat\rho_1(y)\over
       |x-y|}.
   \end{multline}
   Combining this with \eqref{unskaliert}, dividing by $Z^{7/3}$, and
   taking the root gives the claimed convergence.
 \end{proof}

 We conclude with two remarks:

 1. The proof of Solovej et al \cite{Solovejetal2008} has the same
 property as the one used here and yields a generalization for the
 multi-center case when the distance between nuclei are kept on the
 Thomas--Fermi scale.

 2. Also the proof of the Scott conjecture of the two more elaborate
 models of atoms, the Brown--Ravenhall operator treated in
 \cite{Franketal2008H} and the no-pair operator in the Furry picture
 treated in \cite{HandrekSiedentop2015}, have the same property that
 the missing error term in Onsager's inequality can be
 added. Repeating the same argument gives the analogues of Theorem
 \ref{Thm:2.1}. One merely needs to adapt the range of allowed
 constants $\gamma$ to $(0,2/(\pi/2+2/\pi)]$ and $(0,1)$ respectively
 and change the meaning of $\hat\rho_Z$ to the respective ground state
 densities.

 \textsc{Acknowledgment:} {\it Partial support of the DFG, grant SI
   348/15-1, is gratefully acknowledged.}

$\bibliographystyle{plain}
$\bibliography{coulomb}

\def\cprime{$'$}

\end{document}